\newtheorem{theorem}{Theorem} [section]
\newtheorem{lemma}[theorem]{Lemma}
\newtheorem{observation}[theorem]{Observation}
\numberwithin{equation}{section}
\newenvironment{proof}[1][Proof]{\textbf{#1.} }{\ \rule{0.5em}{0.5em}}%
\newcommand{\op}[1]{\text{\normalfont{\sc #1}}}
\begin{document}
\title{Buffer Overflow Management with Class Segregation}
\author{Kamal Al-Bawani \footnote{Department of Computer Science, RWTH Aachen University, \texttt{kbawani@cs.rwth-aachen.de}. Supported in part by NRW State within the B-IT Research School.} 
        \and Alexander Souza \footnote{Department of Computer Science, Humboldt University of Berlin, \texttt{souza@informatik.hu-berlin.de}}}
\date{}
\maketitle
\begin{abstract}
\noindent We consider a new model for buffer management of network switches with Quality of Service (QoS) requirements. A stream of packets, each attributed with a value representing its Class of Service (CoS), arrives over time at a network switch and demands a further transmission. The switch is equipped with multiple queues of limited capacities, where each queue stores packets of one value only. The objective is to maximize the total value of the transmitted packets (i.e., the weighted throughput).

We analyze a natural greedy algorithm, \op{greedy}, which sends in each time step a packet with the greatest value. For general packet values $(v_1 < \cdots < v_m)$, we show that \op{greedy} is $(1+r)$-competitive, where $r = \max_{1\le i \le m-1} \{v_i/v_{i+1}\}$. Furthermore, we show a lower bound of $2 - v_m / \sum_{i=1}^m v_i$ on the competitiveness of any deterministic online algorithm. In the special case of two packet values (1 and $\alpha > 1$), \op{greedy} is shown to be optimal with a competitive ratio of $(\alpha + 2)/(\alpha + 1)$.
\end{abstract}
\section{Introduction}
In most of today's networks, the concept of Quality of Service (QoS) is implemented by assigning different numerical values to packets in order to provide them with differentiated levels of service (priorities). A packet value corresponds to its Class of Service (CoS). QoS guarantees are essential for network streams that, for example, are latency sensitive or require fixed bit rates (e.g., voice-over-IP and IP-television streams).

We study the following \emph{online} problem of packet buffering in QoS networks. A sequence of packets arrives over time, each with a certain value, at the input ports of a network switch. Upon its arrival, a packet $p$ is either accepted or rejected. Accepted packets are stored in queues of restricted capacities placed at the input ports. At each time step, at most one packet is extracted from a queue and sent out of the switch. If preemption is allowed, an enqueued packet may be dropped before it is sent. Rejected and preempted packets are deleted, and the total value of the transmitted (sent) packets defines the benefit of the sequence. We aim at maximizing this benefit.

In probabilistic analysis of network traffic, packet arrivals are often assumed to be Poisson processes. This assumption has been widely undermined due to burstiness of network traffic (i.e., packets arrive in bursts rather than in smooth Poisson-like flows) \cite{Paxson95}. Therefore, it became common to consider buffering problems with \emph{competitive analysis} \cite{Borodin} which assumes no probability distribution on the network traffic. Specifically, the performance of an online algorithm \op{alg} is measured by comparing its benefit to the benefit of an optimal offline algorithm \op{opt}, i.e., an algorithm that knows the entire input sequence at the outset and determines a solution maximizing the benefit. \op{alg} is called \emph{$c$-competitive} if, for any input sequence, the inequality $\op{opt}/\op{alg} \leq c$ holds, where \op{opt} and \op{alg} denote the respective total benefits.
\paragraph{New Model: Motivation and Notation.} In this paper, we bring into focus a new model of $n$ queues and $m$ packet values in which each queue is assigned exactly one value. Let $V = \{ v_1 < \dots < v_m \}$ be the set of $m$ non-negative \emph{packet values}, and $Q = \left\{q_1, \ldots, q_n \right\}$ be the set of $n$ queues. A packet that is of value $v_{i}$ is denoted a \emph{$v_{i}$-packet}. Each queue is assigned a value. If a queue $q_k$ is assigned a value $v_i$, all packets stored in $q_k$ are $v_i$-packets, and $q_k$ is called a \emph{$v_{i}$-queue}. Each queue $q_k$ has its own capacity $B_k$, i.e., $q_k$ can store up to $B_k$ packets.

This model is motivated by two applications:
\begin{enumerate}[(1)]
\item  In the state-of-the-art design of switching devices, e.g. of Cisco switches~\cite{cisco124}, each queue is devoted to store packets of one CoS-value only. In this case, $n=m$ and the $i$-th queue stores packets of the $i$-th value, for $1 \leq i \leq m$. 
\item In OFDMA-based wireless networks~\cite{Li03}, a base station that serves $n$ users is devised with $n$ queues, such that the $i$-th queue stores packets destined to the $i$-th user, for $1 \leq i \leq n$. To reflect the QoS requirements in this model, one may assign each user/queue (and hence each packet belonging to this user) a value that corresponds to its CoS. In this case, it can be that $n \neq m$ as multiple users may be of the same value or a value is assigned to none of the users.
\end{enumerate}
Time is discretized into steps of unit length. An input sequence consists of arrive and send events, where an arrive event corresponds to the arrival of a new packet and a send event corresponds to the transmission of an enqueued packet. Packets arrive over time, and each arriving packet is attributed with a unique arrive time, a value $v_i \in V$ and a destination queue $q_k \in Q$, such that $q_k$ is a $v_i$-queue. We assume that these attributes are determined by an adversary. Arrive events occur at non-integral times (within a time step) and an arriving packet is either \emph{admitted} to the queue or \emph{rejected}. Send events occur at integral times (at the end of each time step), and in each send event, exactly one packet from a non-empty queue, if any, is \emph{transmitted} (\emph{sent}). Preemption does not make sense in this model as all packets inserted into the same queue have the same value. Thus, a packet that is accepted must eventually be sent. The \emph{benefit} of an event sequence is defined as the sum of values of the sent packets, and our goal is to maximize this benefit. 

Any online and offline algorithm in this model can be considered to be \textit{diligent}: It must send a packet if it has a non-empty queue at sending time. Moreover, since accepting a packet of one value does not interfere with packets of other values, we may extend the concept of diligence and assume that any algorithm must accept a packet arriving at a queue if this queue has residual capacity. Notice that each algorithm, which is not diligent can be transformed into a diligent algorithm, without changing the benefit. This is because the diligent algorithm can send everything the other algorithm sends, but earlier; and it can accept everything the other algorithm accepts, but also earlier.
\paragraph{Related Work.} The problem of packet buffering has been considerably studied in the recent years under a wide variety of models. Despite its relevance in practice, our model with class segregation has not been considered theoretically before. We concentrate here only on previous multi-queue models. For a comprehensive and up-to-date survey of competitive buffer management, we refer the reader to \cite{Goldwasser10}. \newline
\indent Our model generalizes the unit-valued model which consists of $m$ queues of the same capacity $B$ and where all packets have unit value. Azar and Richter \cite{Azar03} showed that any diligent policy is $2$-competitive, and that no deterministic policy can be better than $(2 - 1/m)$-competitive for $B = 1$ or $(1.366 - \Theta(1/m))$-competitive for arbitrary $B$. Albers and Schmidt \cite{Albers04} improved these results and gave a policy called \op{semi-greedy} that is $1.889$-competitive for any $B$ with $m \gg B$, and $1.857$-competitive for $B = 2$. They also improved the lower bound to $e/(e-1) \approx 1.582$ for any $B$ if $m \gg B$. Azar and Litichevskey \cite{Azar04b} achieved this lower bound for large $B$, giving an algorithm with a competitive ratio that tends to $e/(e-1)$. If randomization is allowed, Azar and Richter \cite{Azar03} gave a $e/(e-1) \approx 1.582$-competitive policy for $B > \log m$, and a lower bound of $(1.46 - \Theta(1/m))$ for $B = 1$. Albers and Schmidt \cite{Albers04} 
improved the lower bound to $1.466$ for any $B$ and large $m$, and $1.231$ for $m = 2$. Recently, Bienkowski and Madry \cite{Bienkowski08} gave a policy for $m=2$ and any $B$ with the optimal competitive ratio of $1.231$.

On the other hand, our model is a special case of the general-valued multi-queue model in which each of $m$ \op{FIFO} queues can store up to $B$ packets of different values. Azar and Richter \cite{Azar03} started by giving a preemptive algorithm with competitive ratio of $(4 + 2 \ln(\alpha))$ for general packet values in $[1, \alpha]$, and a $2.6$-competitive algorithm for the special case of two packet values, $1$ and $\alpha>1$. After that, they provided in \cite{Azar04} a $3$-competitive policy called \op{transmit-largest-head} (\op{tlh}) for the general-valued case. Recently, Itoh and Takahashi \cite{Itoh06} improved the analysis of \op{tlh} and proved $(3 - 1/\alpha)$-competitiveness.
\paragraph{Our Contribution.} We study a natural greedy algorithm, \op{greedy}, which sends in each time step a packet with the greatest value. We analyze \op{greedy} under two model variants: the general-valued variant with $m$ packet values $\{ v_1, \dots, v_m \}$, and the two-valued variant with two packet values, 1 and $\alpha > 1$. The latter variant is especially interesting since in practical applications there are often only two types of packet streams: best-effort packets (given value $1$) and priority packets (given value $\alpha$). 

In the general-valued case, we show that \op{greedy} is $(1+r)$-competitive, where $r = \max_{1\le i \le m-1} \{v_i/v_{i+1}\}$. Thus, the competitive ratio of \op{greedy} is strictly below 2, and it even tends to 1 as $r$ tends to 0. For example, choosing the packet values to be powers of 2 makes \op{greedy} 3/2-competitive. 

Furthermore, we show a lower bound of $2 - v_m / \sum_{i=1}^m v_i$ on the competitiveness of any deterministic online algorithm. In fact, one can easily show that \op{greedy} is asymptotically optimal, even with a competitive ratio near 2. Notice that when $r$ tends to 1, the differences between packet values will shrink to 0 and thus the lower bound will tend to $2-1/m$, which is asymptotically 2.

Finally, in the two-valued case, \op{greedy} is shown to be optimal with a competitive ratio of $(\alpha + 2)/(\alpha + 1)$.
\section{An Online Algorithm \op{greedy}}
We define \op{greedy} as follows. At arrive events, \op{greedy} accepts packets of any value until the respective queue becomes full. At send events, it serves the non-empty queue, if any, with the highest packet value, i.e., a $v_{i}$-packet is sent only when all $v_{j}$-queues are empty, for $i < j \le m$. If more than one queue of the highest value are non-empty, \op{greedy} arbitrarily chooses one of them for sending.

We start by showing that \op{greedy} is 2-competitive in the general-valued case, and $(\alpha +1)/\alpha$-competitive in the two-valued case.
\begin{theorem}
\label{theorem_generalcaseratio} 
The competitive ratio of \op{greedy} is at most $2$ for general packet values, and $(\alpha +1)/\alpha$ for two packet values, 1 and $\alpha > 1$.
\end{theorem}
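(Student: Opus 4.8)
The plan is to run \op{greedy} against an optimal algorithm \op{opt} on the same input and compare them packet by packet. First I would record the two structural facts the model hands us for free: since queues are value-segregated and non-preemptive, every admitted packet is eventually transmitted, so the benefit of either algorithm equals the total value of the packets it \emph{admits}; and \op{greedy} rejects a packet $p$ destined for a $v_i$-queue $q$ only when $q$ already holds $B_q$ packets of value $v_i$, all of which \op{greedy} transmits. Writing $S$ and $O$ for the sets of packets transmitted by \op{greedy} and \op{opt} (both drawn from the same arrivals), and $R = O \setminus S$ for those \op{opt} sends but \op{greedy} rejects, I get $\op{opt} = \mathrm{val}(O\cap S) + \mathrm{val}(R) \le \op{greedy} + \mathrm{val}(R)$, using $\mathrm{val}(O\cap S)\le\mathrm{val}(S)=\op{greedy}$. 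The problem thus reduces to bounding $\mathrm{val}(R)$: it suffices to prove $\mathrm{val}(R) \le \op{greedy}$ for the factor $2$, and $\mathrm{val}(R) \le \op{greedy}/\alpha$ for the two-valued refinement.

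To bound $\mathrm{val}(R)$ I would construct a charging scheme, i.e.\ an injection $\phi : R \to S$ into \op{greedy}'s own transmissions that never decreases value, $\mathrm{val}(\phi(p)) \ge \mathrm{val}(p)$; injectivity together with value-monotonicity then gives $\mathrm{val}(R) \le \mathrm{val}(S) = \op{greedy}$. The image of a rejected packet $p$ (value $v_i$, queue $q$) is located by asking why $q$ stayed full: at every send event during which $q$ is full but \op{greedy} does not drain it, \op{greedy} is by its own rule serving a strictly higher queue and so emits a packet of value $> v_i$; the remaining rejections at $q$ are absorbed by the value-$v_i$ packets \op{greedy} eventually sends out of $q$ itself. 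Both kinds of target have value $\ge v_i$, which is exactly the monotonicity required.

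For the two-valued case I would sharpen the target side of the charging. Since \op{greedy} gives the value-$\alpha$ queues absolute priority, I would first argue that \op{greedy} does not fall behind \op{opt} on $\alpha$-packets, so that $R$ contains no $\alpha$-packet and every element of $R$ is a rejected $1$-packet. Each such $1$-packet is rejected only while its $1$-queue is held full, and a $1$-queue is passed over by \op{greedy} precisely when \op{greedy} is busy emitting $\alpha$-packets; charging each rejected $1$-packet to a distinct such $\alpha$-transmission yields $\mathrm{val}(R) = \#\{\text{rejected }1\text{-packets}\} \le a$, where $a$ is the number of $\alpha$-packets \op{greedy} sends. As $\op{greedy} \ge \alpha a$, this gives $\mathrm{val}(R) \le a \le \op{greedy}/\alpha$ and hence $\op{opt} \le (1 + 1/\alpha)\,\op{greedy}$.

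The main obstacle in both parts is verifying that these chargings are genuinely injective despite the coupling through the single shared send slot per step and \op{opt}'s freedom to defer a rejected packet to an arbitrary later step. A long congested interval at a low queue can reject many more packets in a single step than there are send events, so the injection cannot be built locally and must be organised globally, e.g.\ by a left-to-right sweep over the send events or a Hall-type matching that caps how many rejections are assigned to any one transmission; the no-$\alpha$-loss claim underlying the two-valued bound is itself a delicate point that leans on the absolute priority of the high queues. Establishing that no target is overused — equivalently, that $R$ is dominated by the qualifying transmissions of value $\ge v_i$ — is the technical heart, with the remaining care going to boundary effects such as steps where \op{greedy} is idle and the final flushing of all queues.
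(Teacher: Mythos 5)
Your reduction $\op{opt} \le \mathrm{val}(O\cap S) + \mathrm{val}(R)$ is sound, but everything of substance is in the step you defer: the existence of a value-monotone injection $\phi : R \to S$. You yourself flag this as ``the technical heart'' and offer only candidate strategies (a sweep, a Hall-type matching), so the proposal is a plan rather than a proof. Note that the global counting facts you would need are essentially the paper's Lemma~\ref{centrallemma}, which the paper proves only in the \emph{restricted} model (one queue per value, uniform capacity $B$) and only via a nontrivial potential-function argument over a red/green interval decomposition. The paper proves the present theorem by a completely different and much lighter route: it modifies \op{opt} step by step so that its queues stay \emph{identical} to \op{greedy}'s (both being diligent, they then accept exactly the same packets), observes that at any send event where \op{greedy} sends a $v_i$-packet and \op{opt} sends a $v_j$-packet we must have $v_j \le v_i$, and lets the modified \op{opt} send both packets, giving a per-event ratio of $(v_i+v_j)/v_i \le 2$; in the two-valued case, when \op{greedy} sends a $1$-packet all $\alpha$-queues of both algorithms are empty, so the per-event ratio is at most $(\alpha+1)/\alpha$. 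No global charging is needed.

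More seriously, your two-valued argument rests on a claim that is \emph{false} in the generality this theorem covers. The theorem concerns the general model with $n$ queues, where several queues may carry the same value; the one-queue-per-value restriction enters only in the next section. Take two $\alpha$-queues $q_a, q_b$ of capacity $1$ and a diligent \op{opt}. Step 1: $a_1 \to q_a$ and $b_1 \to q_b$ arrive; both algorithms accept both; \op{greedy} (breaking its tie) sends $a_1$ while \op{opt} sends $b_1$. Step 2: $a_2 \to q_a$ and $b_2 \to q_b$ arrive; \op{greedy} accepts $a_2$ and rejects $b_2$, while \op{opt} is forced to reject $a_2$ and accept $b_2$. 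Both algorithms end up sending three $\alpha$-packets, so the theorem holds with ratio $1$, but $R = O \setminus S = \{b_2\}$ \emph{does} contain an $\alpha$-packet, and $\mathrm{val}(R) = \alpha$ exceeds $\op{greedy}/\alpha = 3$ whenever $\alpha > 3$. So both your claim that $R$ contains no $\alpha$-packet and the intermediate bound $\mathrm{val}(R) \le \op{greedy}/\alpha$ that you reduce to are unprovable as stated; the difficulty is that counting statements like Lemma~\ref{lemma_ag} ($A^*_\alpha = A_\alpha$) control cardinalities, not the packet \emph{sets}, and tie-breaking among equal-value queues makes the sets diverge. Any repair must be formulated in terms of per-value counts $A^*_j - A_j$ rather than the set difference $O \setminus S$ — which is exactly the (heavier) route the paper reserves for its stronger bounds in the restricted model.
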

\begin{proof}
 The idea of the proof is to modify \op{opt} in a manner that does not decrease its benefit and which ensures that the queues of both \op{opt} and \op{greedy} remain identical at any time. Thus, to show a competitive ratio of $c$, it suffices to show that the benefit of the modified \op{opt} is in any send event at most $c$ times the benefit of \op{greedy}. (This idea is used in \cite{Li05} in the context of packet buffering with deadlines.)
 
 Fix a time step $t$ in which both \op{opt} and \op{greedy} have identical queues. Let $\op{greedy}(t)$ (resp., $\op{opt}(t)$) denote the value of the packet(s) sent by \op{greedy} (resp., $\op{opt}$) in $t$. Since \op{opt} is assumed to be diligent as well, it will accept in $t$ as many packets as \op{greedy} accepts of each value. Thus, the queues of the two algorithms remain identical at the end of $t$. Now, consider the send event of $t$: (1) If \op{opt} and \op{greedy} send a packet from the same queue, the queues of the two algorithms remain identical and $\op{opt}(t) = \op{greedy}(t)$. (2) If \op{greedy} sends a packet from a $v_i$-queue and \op{opt} sends a packet from a $v_j$-queue, it must hold that $v_j \le v_i$, since both algorithms have the same queue contents and \op{greedy} gives priority to packets of the greatest value. In this case, we modify \op{opt} by allowing it to additionally send a packet of the same $v_i$-queue that \op{greedy} sends from. Thus, $\op{opt}(t) = v_j + v_i \le 2 v_i = 2 \cdot \op{greedy}(t)$. Moreover, we insert a packet in \op{opt}'s $v_j$-queue, so the queues of \op{opt} and \op{greedy} become identical again. Clearly, these modifications on \op{opt} can only increase its benefit.
 
 With a similar argument, one can show a competitive ratio of $(\alpha +1)/\alpha$ for the case of two packet values, 1 and $\alpha > 1$.
\end{proof}
\\ \\
Next, we show improved competitive ratios for \op{greedy} when our model is slightly restricted, so that only one queue is devoted for each packet value, and all queues have one capacity $B$. Due to these restrictions, the competitive ratio of \op{greedy} is shown to be strictly below 2 in the general-valued case, and $(\alpha +2)/(\alpha+1)$ in the two-valued case.

\subsection{The General-valued Case}
Let $r = \max_{1\le i \le m-1} \{v_i/v_{i+1}\}$. The following theorem shows that \op{greedy} is $(1+r)$-competitive.
\begin{theorem}
\label{theorem_generalcaseratio} 
The competitive ratio of \op{greedy} is at most $1+r$.
\end{theorem}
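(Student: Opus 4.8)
The plan is to refine the coupling argument from Theorem 2.1. In that proof, whenever \op{greedy} sends a $v_i$-packet and \op{opt} sends a $v_j$-packet with $v_j \le v_i$, we charged \op{opt} the extra value $v_j$ and reinserted a packet into its $v_j$-queue to keep the queues synchronized, giving the crude bound $v_j + v_i \le 2v_i$. The restriction to one queue per value with uniform capacity $B$ should let me bound $v_j$ more tightly relative to $v_i$, replacing the factor $2$ by $1+r$.

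First I would set up the same synchronized coupling: fix a step $t$ where \op{opt} and \op{greedy} have identical queues, note that diligence forces them to admit the same packets, and focus on the send event. The only problematic case is again when \op{greedy} sends from the $v_i$-queue while \op{opt} sends from a strictly lower $v_j$-queue, $j < i$. The key observation I would push on is why \op{greedy} chose $v_i$: because every $v_k$-queue with $k > i$ is empty and the $v_i$-queue is non-empty. Since there is exactly one queue per value, \op{opt}'s decision to serve $v_j$ with $j<i$ while the $v_i$-queue is non-empty is precisely the suboptimal move, and I want to account for it by reinserting into the $v_j$-queue as before, but now bounding the ratio $v_j/v_i$.

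The crucial step is to bound $v_j/v_i$ by $r$. This does not follow from the single-step inequality $v_j/v_i \le v_{i-1}/v_i \le r$ alone if $j$ can be much smaller than $i$, since $v_j/v_i$ could in principle be a product of several ratios each at most $r$, making it smaller than $r$ — which is actually favorable. So the honest statement is $v_j/v_i \le v_{i-1}/v_i \le r$ whenever $j \le i-1$, because the values are increasing, hence $v_j \le v_{i-1}$. Thus in the bad case $\op{opt}(t) = v_j + v_i \le r\,v_i + v_i = (1+r)v_i = (1+r)\cdot \op{greedy}(t)$, and in the good case the ratio is $1 \le 1+r$. Summing over all send events and using that the modifications never decrease \op{opt}'s benefit yields $\op{opt} \le (1+r)\,\op{greedy}$.

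The main obstacle I anticipate is not the arithmetic but making the synchronization argument fully rigorous under the reinsertion: I must verify that reinserting a $v_j$-packet into \op{opt}'s queue is always legal, i.e. that \op{opt}'s $v_j$-queue has residual capacity at that moment so the coupling is maintainable, and that this reinsertion genuinely corresponds to an only-helpful modification of \op{opt} (the reinserted packet will be sent later at value $v_j$, which is exactly the value \op{opt} just "wasted" by not sending $v_i$). I would argue, as in Theorem 2.1, that since \op{opt} just sent a $v_j$-packet, its $v_j$-queue has a free slot, so reinsertion is valid and keeps the queues identical; and since the modified \op{opt} sends a superset of what the original sends (just rescheduled), its benefit does not decrease. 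Once this bookkeeping is in place, the per-step bound $\op{opt}(t) \le (1+r)\op{greedy}(t)$ closes the argument.
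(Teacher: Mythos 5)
Your proof is correct, but it takes a genuinely different route from the paper's. The paper proves this theorem by a global accounting argument: Lemma~\ref{lemma_ag} shows $A^*_m = A_m$; Lemma~\ref{centrallemma} establishes $\sum_{j=i}^{m-1}(A^*_j - A_j) \le \sum_{j=i+1}^{m} A_j$ for every $i$ via a potential-function induction over a decomposition of the sequence into red and green intervals; Lemma~\ref{lemma_secondmainlemma} combines these inequalities with multipliers $v_1, v_2-v_1, \dots, v_{m-1}-v_{m-2}$ into a weighted bound; and Lemma~\ref{lemma_cruciallemma} supplies the final algebraic step. You instead refine the local exchange/coupling argument of Theorem~2.1, observing that under the one-queue-per-value restriction a divergent send event forces $j \le i-1$, hence $v_j \le v_{i-1} \le r\,v_i$, so the per-step charge drops from $2v_i$ to $(1+r)v_i$; this is sound, and your identification of the chain $v_j \le v_{i-1} \le r v_i$ (rather than a bogus direct appeal to $v_j/v_i \le r$) is exactly right. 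The one point you should make explicit is why inserting a phantom $v_j$-packet never decreases \op{opt}'s benefit: the phantom occupies a slot and may later force \op{opt} to reject an arriving packet, but by class segregation that displaced packet has the same value $v_j$, so the exchange is value-neutral and the net change is the gained $v_i$; your phrase ``sends a superset, just rescheduled'' elides this, though the paper's own Theorem~2.1 proof is no more rigorous on this point. Comparing what each approach buys: your argument is far shorter and never uses the uniform-capacity assumption, so it actually proves the theorem for arbitrary per-queue capacities $B_k$ (the paper's Lemma~\ref{centrallemma} does use a common $B$, e.g.\ when comparing \op{greedy}'s full $v_l$-queue against \op{opt}'s $v_i$-queue); on the other hand, the paper's heavier machinery is not wasted, since Lemmas~\ref{lemma_ag} and~\ref{centrallemma} are reused to obtain the stronger bound $(\alpha+2)/(\alpha+1)$ of Theorem~\ref{thm:two-values} in the two-valued case, which your coupling argument cannot reach --- there it yields only $1+r = (\alpha+1)/\alpha$.
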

We fix an event sequence $\sigma$. Let \op{greedy}'s benefit on $\sigma$ be denoted as $\op{greedy}(\sigma)$, and the benefit of \op{opt} as $\op{opt}(\sigma)$. Furthermore, let $A_i$ and $A^*_i$ denote the total number of $v_{i}$-packets accepted by \op{greedy} and \op{opt}, respectively. Hence, $\op{greedy}(\sigma) = \sum_{i=1}^{m} v_{i} A_i$, and $\op{opt}(\sigma) = \sum_{i=1}^{m} v_{i} A^*_i$. 

We begin by showing that \op{opt} and \op{greedy} send the same number of $v_{m}$-packets.
\begin{lemma}
\label{lemma_ag} 
$A^*_m = A_m$.
\end{lemma}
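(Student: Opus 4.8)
The plan is to prove the two inequalities $A_m \ge A^*_m$ and $A^*_m \ge A_m$ separately. The first will follow from the way \op{greedy} drains the top queue, and the second from the optimality of \op{opt} together with the fact that $v_m$ is strictly the largest value.

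For $A_m \ge A^*_m$, I would show that \op{greedy} accepts the maximum possible number of $v_m$-packets among all diligent algorithms. The key point is that \op{greedy} serves the $v_m$-queue in \emph{every} send event in which it is non-empty, whereas \op{opt} may spend a send event on a lower queue while its $v_m$-queue is still non-empty. Concretely, I would establish by induction over the events of $\sigma$ the invariant that the number of packets in \op{greedy}'s $v_m$-queue never exceeds the number in \op{opt}'s $v_m$-queue. Both algorithms see the same stream of $v_m$-arrivals and both accept greedily, so occupancies can only diverge at send events, where \op{greedy} always removes a packet when its queue is non-empty while \op{opt} removes at most one; a short case check on the arrive and send events preserves the invariant. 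It follows that \op{greedy}'s $v_m$-queue is full only when \op{opt}'s is full as well, so every $v_m$-packet rejected by \op{greedy} is also rejected by \op{opt}, giving $A_m \ge A^*_m$.

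For the reverse inequality $A^*_m \ge A_m$, I would exploit optimality. Among all optimal schedules I select one, still denoted \op{opt}, that in addition maximizes the number of transmitted $v_m$-packets. I then claim this schedule serves its $v_m$-queue in every send event in which it is non-empty. If not, I fix the earliest send event $t$ at which the $v_m$-queue is non-empty but \op{opt} transmits a $v_j$-packet with $j<m$, and I exchange this transmission for one taken from the $v_m$-queue. Because no packet is ever preempted and \op{opt} is diligent, every accepted packet is eventually transmitted, so advancing a $v_m$-transmission to $t$ only frees capacity in the $v_m$-queue earlier and can never reduce the number of $v_m$-packets admitted afterwards. Showing that this exchange does not decrease the total benefit then yields a contradiction: the modified schedule is again optimal but transmits at least as many, and in fact strictly more, $v_m$-packets, contradicting the choice of \op{opt}. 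Hence \op{opt} serves the $v_m$-queue exactly as \op{greedy} does, the two $v_m$-queues evolve identically, and therefore $A^*_m = A_m$.

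The main obstacle is the exchange step in the second direction. After promoting a $v_m$-transmission to event $t$, the displaced $v_j$-packet and all later transmissions must be rescheduled, and one must verify that this rearrangement never forfeits value. I expect to handle this by a charging argument: promoting one $v_m$-transmission delays at most one lower-value transmission, which, since nothing is preempted, can still be sent at the later event originally used for the promoted $v_m$-packet, so the benefit does not decrease; and whenever the extra room lets \op{opt} additionally admit a $v_m$-packet it had previously rejected, the benefit strictly increases, since the gained value $v_m$ dominates the at most one displaced lower value, which is at most $v_{m-1} < v_m$. In either case the maximality of \op{opt}, in benefit or in the number of $v_m$-transmissions, is violated, which closes the argument.
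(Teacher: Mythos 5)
Your first inequality, $A_m \ge A^*_m$, is correct and in fact more careful than the paper's treatment: the paper disposes of it with the one-line remark that $v_m$-packets enjoy absolute priority under \op{greedy}, whereas your coupling invariant (that \op{greedy}'s $v_m$-queue occupancy never exceeds \op{opt}'s, hence \op{greedy} rejects a $v_m$-packet only when \op{opt} must reject it too) is a clean, rigorous version of that claim.

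The second direction, however, is built on an intermediate claim that is \emph{false}, not merely unproven. You assert that an optimal schedule chosen to maximize the number of transmitted $v_m$-packets must serve its $v_m$-queue at every send event in which that queue is non-empty. Take $m=2$, $B=1$, values $1<\alpha$: in step 1 an $\alpha$-packet and a $1$-packet arrive, in step 2 a single $1$-packet arrives, and nothing arrives afterwards. The optimal benefit $\alpha+2$ is attained \emph{only} by sending the $1$-packet in the first send event, while the $\alpha$-queue is non-empty, so that the second $1$-packet can be admitted; hence every optimal schedule (all of which transmit exactly one $\alpha$-packet) violates your claim, and no tie-breaking selection can rescue it. Your exchange fails on precisely this instance: promoting the $\alpha$-transmission to the first send event and delaying the displaced $1$-packet to the second send event forces the rejection of the $1$-packet arriving in step 2, so the benefit \emph{drops} from $\alpha+2$ to $\alpha+1$ (your "the benefit does not decrease" step is wrong), and the number of transmitted $\alpha$-packets stays at one (your "strictly more $v_m$-packets" step is also wrong). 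Note that the lemma itself still holds here, $A^*_m=A_m=1$; it is the structural statement "\op{opt} serves the $v_m$-queue exactly as \op{greedy} does" that is untrue.

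The idea your proposal is missing, and which the paper uses, is that the exchange must be \emph{anchored at an actual rejection}. The paper assumes $A_m>A^*_m$ and considers the first arrive event $t$ at which \op{opt} rejects a $v_m$-packet that \op{greedy} accepts; at that moment \op{opt}'s $v_m$-queue is full while \op{greedy}'s is not, so there must be an earlier send event $t'$ at which \op{opt} sent some $v_j$-packet, $j<m$, although its $v_m$-queue was non-empty. Swapping the transmission at $t'$ frees a slot in \op{opt}'s $v_m$-queue, so the packet rejected at $t$ becomes admissible: the exchange now carries a \emph{guaranteed} gain of $v_m$, which strictly dominates the at most one $v_j$-packet ($v_j<v_m$) that delaying the displaced packet can cost through a later overflow, contradicting optimality. (Even this needs $t'$ to be chosen suitably, e.g.\ as the last such event before $t$, so that \op{opt}'s $v_m$-queue does not empty in between; the paper is terse on this point.) Without such an anchor there is no gain to pay for the possible $v_j$-loss, which is exactly why your version of the exchange, and the claim it is meant to establish, break down.
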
%
\begin{proof}
By definition of \op{greedy}, $v_{m}$-packets enjoy absolute priority at sending. Hence, the number of $v_{m}$-packets that \op{greedy} sends is maximum, i.e., $A_m \geq A^*_m$.\newline
\indent Assume that $A_m$ becomes greater than $A^*_m$ for the first time at arrive event $t$. This means that \op{opt} rejects at $t$ a $v_{m}$-packet that \op{greedy} accepts. Hence, \op{opt}'s $v_{m}$-queue was full immediately before $t$ but \op{greedy}'s had at least one vacancy. Let $j=1, \cdots, m-1$. Since $A_m = A^*_m$ immediately before $t$, there must have been a send event $t'$ before $t$ where \op{opt} sent a $v_{j}$-packet while its $v_{m}$-queue was not empty. Change \op{opt}'s schedule by sending a $v_{m}$-packet at $t'$ instead of the $v_{j}$-packet. Clearly, this yields an increase in \op{opt}'s benefit and the rejected $v_{m}$-packet at time $t$ can be accepted.
\end{proof}
\\\ \\\
The following lemma shows an upper bound on the total number of packets that \op{opt} accepts but \op{greedy} rejects.
\begin{lemma}
\label{centrallemma}
For any $1\leq i \leq m-1$, 
\begin{equation}
\nonumber \sum_{j=i}^{m-1} \left(A^*_j - A_j\right) \leq \sum_{j=i+1}^{m} A_j.
\end{equation}
\end{lemma}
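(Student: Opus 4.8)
The plan is to run \op{greedy} and \op{opt} side by side on the fixed sequence $\sigma$ and compare, queue by queue, how many packets each one rejects. Writing $R_j$ and $R^*_j$ for the numbers of $v_j$-packets rejected by \op{greedy} and \op{opt}, diligence gives $A^*_j - A_j = R_j - R^*_j$ for every $j$ (both algorithms see the same arrivals), so it suffices to control, at each value level $j$, the excess of \op{greedy}'s rejections over \op{opt}'s. Concretely, I would track the per-queue occupancies $g_j(t)$ and $o_j(t)$ of the two algorithms and classify every $v_j$-arrival as ``both accept'', ``both reject'', an \emph{excess rejection} $W_j$ (\op{opt} accepts while \op{greedy}'s full $v_j$-queue forces a rejection), or a \emph{reverse rejection} $W'_j$; then $A^*_j - A_j = W_j - W'_j \le W_j$.

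The heart of the argument is a potential $\Phi_j(t) = \max\{g_j(t) - o_j(t),\, 0\}$ for each fixed $j$ with $i \le j \le m-1$. I would run through every event type and record its effect on $\Phi_j$. The claim to establish is that $\Phi_j$ can only increase at a send event in which \op{greedy} transmits a packet of value strictly greater than $v_j$ while \op{opt} transmits a $v_j$-packet---call these $U_j$-events---and that each such event raises $\Phi_j$ by at most one, whereas each excess rejection $W_j$ lowers it by exactly one. Since $\Phi_j$ starts and ends at $0$ (letting both algorithms drain after the last arrival) and never goes negative, the total rise equals the total drop, which forces $W_j \le U_j$ and hence $A^*_j - A_j \le U_j$.

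The delicate point, and the step I expect to be the main obstacle, is the case analysis that pins the increases of $\Phi_j$ down to $U_j$-events only. The two apparently dangerous cases are: (i) a $v_j$-arrival that \op{greedy} accepts but \op{opt} rejects, and (ii) a send event at which \op{opt} transmits a $v_j$-packet but \op{greedy} transmits a value below $v_j$ (or idles). In both cases $s_j := g_j - o_j$ does increase by one, so one must argue that $\Phi_j$ nevertheless stays at $0$: in case (i) \op{opt}'s rejection forces $o_j = B$, so $s_j \le -1$ before and $s_j \le 0$ after; in case (ii) \op{greedy}'s choice of a value below $v_j$ means, by its priority rule, that its $v_j$-queue is empty, so $g_j = 0$ and again $s_j \le 0$ afterwards. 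Making these two observations precise, together with checking the remaining routine cases, is what carries the lemma; note in particular that this is exactly why no exchange/modification of \op{opt} is needed here.

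Finally I would sum $A^*_j - A_j \le U_j$ over $j = i, \dots, m-1$. Each $U_j$-event is a single send event at which \op{greedy} transmits a packet of value strictly above $v_j \ge v_i$, hence of value at least $v_{i+1}$, while \op{opt} simultaneously sends a $v_j$-packet; since \op{opt} sends only one packet per step, distinct indices $j$ contribute disjoint send events. Therefore $\sum_{j=i}^{m-1} U_j$ is at most the total number of send events in which \op{greedy} transmits a packet of value at least $v_{i+1}$, which equals $\sum_{j=i+1}^{m} A_j$. Chaining the inequalities gives $\sum_{j=i}^{m-1}(A^*_j - A_j) \le \sum_{j=i+1}^{m} A_j$, as required.
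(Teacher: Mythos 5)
Your proof is correct, but it follows a genuinely different route from the paper's. You work value level by value level: for each fixed $j$ you use the occupancy-difference potential $\Phi_j = \max\{g_j - o_j,\,0\}$ to charge every excess rejection (a $v_j$-arrival that \op{opt} accepts but \op{greedy}'s full queue rejects) to a distinct send event in which \op{greedy} transmits a value strictly above $v_j$ while \op{opt} transmits a $v_j$-packet; your two ``dangerous'' cases are resolved exactly right (\op{opt}'s rejection forces $o_j = B$, and \op{greedy}'s priority rule forces $g_j = 0$ when it serves below $v_j$), and the final disjointness observation --- that \op{opt} sends only one packet per step, so the charged send events for distinct $j$ cannot coincide --- is what lets you sum over $j$ and bound the total by $\sum_{j=i+1}^m A_j$. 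The paper instead partitions the whole sequence into \emph{red} and \emph{green} intervals according to what \op{greedy} sends, observes that in green intervals \op{greedy}'s high-value queues are empty and no high-value packets arrive, and then runs an induction on a single aggregated potential built from cumulative sent-plus-buffered counts $\delta_j + b_j$ and $\delta^*_j + b^*_j$ within each red interval. Your argument buys modularity and a slightly stronger per-level statement ($A^*_j - A_j \le U_j$ with the $U_j$-event sets pairwise disjoint), and it avoids the interval machinery entirely; the paper's approach handles all levels $j \ge i$ simultaneously in one potential and its interval bookkeeping plays the role that your disjointness trick plays. One point worth making explicit if you write this up: both your case (i) and the claim that a rejection by \op{greedy} forces $g_j = B$ rely on the standing assumptions of this subsection (one queue per value, common capacity $B$) and on the wlog diligence of \op{opt} in acceptance, which the paper establishes in the introduction.
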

\begin{proof}
For any $1 \leq i \leq m-1$, we define the following notion of \textit{time interval}. A time interval $I$ ends with a send event, and the next time interval starts with the first arrive event after the end of $I$. We call a time interval $I$ \textit{red interval} (or $r$-interval) if the value of any packet sent by \op{greedy} in $I$ is in $\left\{v_{i},\ldots, v_{m}\right\} \subseteq V$, and \textit{green interval} (or $g$-interval) if the value of any packet sent by \op{greedy} in $I$ is in $\left\{v_{1},\ldots, v_{i-1}\right\} \subseteq V$ or $I$ contains send events in which \op{greedy} does not send any packet. We partition $\sigma$ into $r$- and $g$-intervals such that no two consecutive intervals are of the same color. Clearly, this partitioning is feasible.\newline
\indent For the rest of the proof, let $v_j$-packet denote any packet of value $v_{i},\ldots, v_{m}$. The following observation follows from the definition of $g$-intervals; otherwise, \op{greedy} would send $v_{j}$-packets in a $g$-interval and thus it is no longer a $g$-interval.
\begin{observation}
\label{remark_0interval} 
In any $g$-interval, any $v_{j}$-queue of \op{greedy} is empty and no $v_{j}$-packets arrive.
\end{observation}
Let $A_j(I)$ (resp., $A^*_j(I)$) denote the total number of $v_j$-packets accepted by \op{greedy} (resp., \op{opt}) in time interval $I$, and let $\Re$ denote the set of all $r$-intervals. Given Observation \ref{remark_0interval}, it follows that $A_j = \sum_{I \in \Re} A_j(I)$ and $A^*_j =  \sum_{I \in \Re} A^*_j(I)$. Hence, it suffices to prove the lemma for any $r$-interval. Thus, we fix an $r$-interval $I$ and show that $\sum_{j=i}^{m-1} ( A^*_j(I) - A_j(I) ) \leq \sum_{j=i+1}^{m} A_j(I)$.

Let $\delta_{j}(t)$ denote the total number of $v_{j}$-packets sent by \op{greedy} between the first and the $t$-th event of $I$, inclusive. Let $b_{j}(t)$ denote the size of \op{greedy}'s $v_{j}$-queue right after the $t$-th event of $I$. By Observation \ref{remark_0interval}, interval $I$ starts with \op{greedy}'s $v_{j}$-queue empty. Thus, if no further $v_{j}$-packets arrive in $I$ after event $t$, we get that $A_j(I) = \delta_{j}(t) + b_{j}(t)$. For \op{opt}, we define $\delta_{j}^{*}(t)$ as the total number of $v_{j}$-packets \emph{that arrive in $I$} and are sent by \op{opt} between the first and the $t$-th event of $I$, inclusive; and $b_{j}^{*}(t)$ as the number of packets \emph{that arrive in $I$} and still reside in \op{opt}'s $v_{j}$-queue right after the $t$-th event of $I$. Thus, if no further $v_{j}$-packets arrive in $I$ after event $t$, we also get that $A_j^{*}(I) = \delta_{j}^{*}(t) + b_{j}^{*}(t)$.

We now define a potential function $\varphi:\mathbb{Z} \mapsto \mathbb{Z}$ as follows.
\begin{equation}
\nonumber \varphi(t) = \sum_{j=i}^{m} \left( \delta_{j}(t)+ b_{j}(t) \right) + \sum_{j=i+1}^{m-1} \left( \delta_{j}(t)+ b_{j}(t) \right) - \sum_{j=i}^{m-1} \left(\delta^{*}_{j}(t) + b^{*}_{j}(t)\right)
\end{equation}
Clearly, it suffices to prove that $\varphi(t) \geq 0$ for any $t \geq 0$. We conduct an induction proof over the number of arrive and send events of $I$. (Notice that $\varphi(t)$ changes on arrive and send events only.) For the induction basis, $\varphi(0) = 0$ as nothing arrives in $I$ before its first event. Assume that $\varphi(t) \geq 0$ for $t \le k-1$. We will show that $\varphi(k) \geq 0$. 

First, assume that the $k$-th event is a send event. Since $I$ is an $r$-interval, \op{greedy} sends a $v_{l}$-packet, where $i \leq l \leq m$. We show that $\delta_{j}(k)+ b_{j}(k) = \delta_{j}(k-1)+ b_{j}(k-1)$, for any $j = i, \cdots, m$. This is obvious for $j \neq l$ as nothing has changed in $v_{j}$-queue since event $k-1$. For $j = l$, $\delta_{j}(k) = \delta_{j}(k-1)+1$ as the number of sent $v_{j}$-packets increases by 1, and $b_{j}(k)= b_{j}(k-1)-1$ as the size of $v_{j}$-queue decreases by 1. Thus, $\delta_{j}(k)+ b_{j}(k) = \delta_{j}(k-1)+ b_{j}(k-1)$, as required. Similarly, if \op{opt} sends a $v_{j}$-packet that arrives in $I$, one can show that $\delta^{*}_{j}(k)+ b^{*}_{j}(k) = \delta^{*}_{j}(k-1)+ b^{*}_{j}(k-1)$, for any $j = i, \cdots, m-1$. Therefore, $\varphi(k) = \varphi(k-1) \geq 0$.

Assume now that the $k$-th event is an arrive event. We first observe that in arrive events, $\delta_{j}(k) = \delta_{j}(k-1)$ and $\delta^{*}_{j}(k) = \delta^{*}_{j}(k-1)$ as nothing has been sent since the last event. Moreover, since $I$ is an $r$-interval, the arriving packet must be a $v_{l}$-packet, where $i \leq l \leq m$. Now, we distinguish between four cases: (1) Both \op{opt} and \op{greedy} reject the packet, (2) both accept it, (3) \op{greedy} accepts and \op{opt} rejects, or (4) \op{opt} accepts and \op{greedy} rejects. The potential function does clearly not change in the first case. In the second and third case, $v_{l}$-queue increases by one in \op{greedy} and by at most one in \op{opt} (it does not increase in \op{opt} in the third case). Thus, $\varphi(k)$ will increase by at least 1 due to change in \op{greedy}'s queue (it increases by 2 if $i+1 \le l \le m-1$) and will decrease by at most 1 due to change in \op{opt}'s queue. Hence, $\varphi(k) \ge \varphi(k-1)$.

Now, consider the last case where \op{opt} accepts but \op{greedy} rejects. First, we rewrite $\varphi(k)$ as
\begin{align}
 \nonumber \varphi(k) = \sum_{j=i}^{m} \delta_{j}(k) + \sum_{j=i}^{m} b_{j}(k) + \sum_{j=i+1}^{m-1} \left( \delta_{j}(k)+ b_{j}(k) \right) - \sum_{j=i}^{m-1} \delta^{*}_{j}(k) - \sum_{j=i+1}^{m-1} b^{*}_{j}(k) - b^{*}_{i}(k).
\end{align}
Since the arriving $v_{l}$-packet is rejected by \op{greedy}, \op{greedy}'s $v_{l}$-queue must be full. Thus, $\sum_{j=i}^{m} b_{j}(k) \geq B$. However, the size of \op{opt}'s $v_{i}$-queue can never exceed $B$. Hence 
\begin{equation}
 \label{ineq1} \sum_{j=i}^{m} \left(b_{j}(k)\right) - b^{*}_{i}(k) \geq 0.
\end{equation}
Let $\eta$ be the number of send events in interval $I$ up to the current arrive event $k$. Clearly, $\sum_{j=i}^{m-1} \delta^{*}_{j}(k) \leq \eta$. By definition of $r$-intervals, \op{greedy} has sent exactly $\eta$ packets of value $v_{j} \in \left\{v_{i},\ldots, v_{m}\right\}$. Thus, $\sum_{j=i}^{m} \delta_{j}(k)=\eta$. Hence,
\begin{equation}
 \label{ineq2} \sum_{j=i}^{m} \delta_{j}(k) - \sum_{j=i}^{m-1} \delta^{*}_{j}(k) \geq 0.
\end{equation}
Recall that $b^{*}_{j}$ counts \op{opt}'s $v_{j}$-packets that it accepts only in $I$. Thus, for $i+1 \le j \le m-1$, at least $b^{*}_{j}(k)$ packets must arrive in $I$ up to event $k$. Since \op{greedy} starts interval $I$ with an empty $v_{j}$-queue, and since $b^{*}_{j}(k) \leq B$, \op{greedy} must be able to accept at least $b^{*}_{j}(k)$ packets of all $v_{j}$-packets arriving in $I$. Thus, recalling that $\delta_{j}(k)+ b_{j}(k)$ represents the number of $v_{j}$-packets that \op{greedy} accepts in $I$ up to $k$, we get that $\delta_{j}(k)+ b_{j}(k) \geq b^{*}_{j}(k)$, and hence,
\begin{equation}
 \label{ineq3} \sum_{j=i+1}^{m-1} \left( \delta_{j}(k)+ b_{j}(k) \right) - \sum_{j=i+1}^{m-1} b^{*}_{j}(k)  \geq 0.
\end{equation}
By summing up inequalities \ref{ineq1} - \ref{ineq3}, we get that $\varphi(k) \geq 0$, and thus the lemma follows.
\end{proof}
\\\ \\\
Next, we show a weighted version of Lemma \ref{centrallemma} when $i=1$, which is, along with Lemma \ref{lemma_cruciallemma}, essential to improve the competitive ratio of \op{greedy} from 2 to $1+r$.
\begin{lemma}
\label{lemma_secondmainlemma} 
We have that 
\begin{equation}
\nonumber \sum_{j=1}^{m-1} v_j (A^*_j - A_j) \leq \sum_{j=1}^{m-1} v_{j} A_{j+1}.
\end{equation}
\end{lemma}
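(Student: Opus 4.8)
The plan is to obtain this weighted inequality as a suitable \emph{nonnegative} linear combination of the $m-1$ unweighted inequalities supplied by Lemma \ref{centrallemma}, rather than by re-running the potential-function argument. Write $d_j = A^*_j - A_j$ for the ``gap'' at value $v_j$, so that Lemma \ref{centrallemma} at index $i$ reads $\sum_{j=i}^{m-1} d_j \leq \sum_{j=i+1}^{m} A_j$. I would introduce the coefficients $c_1 = v_1$ and $c_i = v_i - v_{i-1}$ for $2 \leq i \leq m-1$. Since the values are strictly increasing ($v_1 < \cdots < v_m$) and $v_1 > 0$, every $c_i$ is nonnegative, so multiplying the $i$-th instance of Lemma \ref{centrallemma} by $c_i$ and summing over $i = 1, \ldots, m-1$ preserves the direction of the inequality.

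It then remains to verify that both sides collapse to exactly the two sides of the stated lemma. On the left, interchanging the order of summation yields $\sum_{i=1}^{m-1} c_i \sum_{j=i}^{m-1} d_j = \sum_{j=1}^{m-1} d_j \sum_{i=1}^{j} c_i$, and the inner sum telescopes, $\sum_{i=1}^{j} c_i = v_j$, giving precisely $\sum_{j=1}^{m-1} v_j (A^*_j - A_j)$. On the right, interchanging summation gives $\sum_{i=1}^{m-1} c_i \sum_{j=i+1}^{m} A_j = \sum_{j=2}^{m} A_j \sum_{i=1}^{j-1} c_i = \sum_{j=2}^{m} v_{j-1} A_j$, which after the reindexing $j \mapsto j+1$ is exactly $\sum_{j=1}^{m-1} v_j A_{j+1}$. (Equivalently, one may phrase this as Abel summation applied to the tail partial sums $S_i = \sum_{j=i}^{m-1} d_j$, using $S_m = 0$ together with the monotonicity $c_i \geq 0$.)

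The only delicate point, and the main obstacle, is the bookkeeping when swapping the order of summation and confirming that the two telescopes land on precisely the claimed expressions; in particular one must track the boundary indices carefully, so that $i=1$ contributes the bare weight $v_1$ rather than a difference, and the $j=m$ term on the right correctly feeds into $A_{j+1}$ at $j = m-1$. Beyond this, the argument is purely algebraic: all of the structural content about the queues already resides in Lemma \ref{centrallemma}, and the present lemma merely reweights it, which is why no new claim about \op{greedy}'s or \op{opt}'s queue behaviour is required.
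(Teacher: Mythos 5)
Your proposal is correct and is essentially the paper's own proof: the paper multiplies the instances of Lemma \ref{centrallemma} by exactly the same coefficients $v_1, v_2 - v_1, \ldots, v_{m-1} - v_{m-2}$ and sums (illustrated there for $m=4$ with the remark that it generalizes), while you carry out the general-$m$ bookkeeping explicitly via the interchange of summation and telescoping. The only nitpick is that you assume $v_1 > 0$; the paper only requires the values to be non-negative, but $c_1 = v_1 \geq 0$ is all that is needed, so nothing breaks.
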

\begin{proof}
For simplicity of exposition, let $m=4$. By Lemma \ref{centrallemma},
\begin{align*}
 i&=1: & (A^*_1 - A_1) + (A^*_2 - A_2) + (A^*_3 - A_3) & & &\le & A_2 + A_3 + A_4 &, \\
 i&=2: & (A^*_2 - A_2) + (A^*_3 - A_3) & & &\le & A_3 + A_4 &, \\
 i&=3: & (A^*_3 - A_3) & & &\le & A_4 &.
\end{align*}
Recall that $v_i < v_{i+1}$, for all $1 \le i \le m-1$. Thus, multiplying both sides of ($i = 1$) by $v_1$, both sides of ($i = 2$) by ($v_2 - v_1$), both sides of ($i = 3$) by ($v_3 - v_2$), and adding up all the resulting inequalities, we get
\begin{equation*}
 v_1 (A^*_1 - A_1) + v_2 (A^*_2 - A_2) + v_3 (A^*_3 - A_3) \le v_1 A_2 + v_2 A_3 + v_3 A_4.
\end{equation*}
Clearly, the argument above can be generalized for any $m \ge 2$.
\end{proof}
\begin{lemma}
\label{lemma_cruciallemma} 
We have that 
\begin{equation}
\nonumber \frac{\sum_{j=1}^{m-1} v_j A_{j+1}}{\sum_{j=1}^{m-1} v_{j+1} A_{j+1}} \le r.
\end{equation}
\end{lemma}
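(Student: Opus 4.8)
The plan is to exploit the term-by-term structure of the two sums. Both the numerator $\sum_{j=1}^{m-1} v_j A_{j+1}$ and the denominator $\sum_{j=1}^{m-1} v_{j+1} A_{j+1}$ are sums of nonnegative quantities, since all packet values $v_j$ and all acceptance counts $A_{j+1}$ are nonnegative. Crucially, the $j$-th summand of the numerator and the $j$-th summand of the denominator share the common factor $A_{j+1}$, so they differ only in the value factor $v_j$ versus $v_{j+1}$. This alignment is what makes a purely local comparison possible.

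First I would invoke the definition $r = \max_{1 \le i \le m-1}\{v_i/v_{i+1}\}$ to obtain, for every $j \in \{1,\dots,m-1\}$, the inequality $v_j \le r\, v_{j+1}$. Multiplying through by the nonnegative quantity $A_{j+1}$ gives the term-by-term bound $v_j A_{j+1} \le r\, v_{j+1} A_{j+1}$. Summing this over $j = 1,\dots,m-1$ yields $\sum_{j=1}^{m-1} v_j A_{j+1} \le r \sum_{j=1}^{m-1} v_{j+1} A_{j+1}$, and dividing both sides by the denominator produces the claimed bound. This is just the elementary fact that a mediant of fractions $a_j/b_j$ with $b_j \ge 0$ never exceeds $\max_j a_j/b_j$.

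The only point requiring any care is the degenerate case in which the denominator $\sum_{j=1}^{m-1} v_{j+1} A_{j+1}$ vanishes; this occurs precisely when $A_2 = \cdots = A_m = 0$, in which case the numerator is zero as well and the inequality holds trivially. Otherwise the denominator is strictly positive and the final division is valid.

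I expect no substantive obstacle here: the lemma reduces entirely to the observation that each individual value ratio $v_j/v_{j+1}$ is bounded by $r$ by definition of $r$, so the only thing to watch is that the shared factors $A_{j+1}$ are nonnegative (allowing the multiplication to preserve the inequality) and that the denominator is nonzero.
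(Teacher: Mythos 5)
Your proof is correct and is essentially the same argument as the paper's: both bound each term via the definitional inequality $v_j/v_{j+1} \le r$ (the paper writes $r = v_k/v_{k+1}$ and cross-multiplies, but this is the same termwise comparison), multiply by $A_{j+1}$, and sum. Your explicit treatment of the degenerate case where the denominator vanishes is a small point of added care that the paper omits.
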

\begin{proof} 
Let $r = v_k/v_{k+1}$, for some $1 \le k \le m-1$. Hence, by definition of $r$, for any $1 \le j \le m-1$, 
\begin{equation}
 \nonumber \frac{v_j}{v_{j+1}} \le \frac{v_k}{v_{k+1}},
\end{equation}
which can be rewritten as $v_{k+1} \cdot v_j \le v_k \cdot v_{j+1}$. Thus, multiplying by $A_{j+1}$ and then summing over all $j$, we get
\begin{equation}
 \nonumber v_{k+1} \sum_{j=1}^{m-1} v_j A_{j+1} \le v_k \sum_{j=1}^{m-1} v_{j+1} A_{j+1},
\end{equation}
from which the lemma follows.
\end{proof}
\\ \\
\begin{proof}[Proof of Theorem~\ref{theorem_generalcaseratio}] The competitive ratio of \op{greedy} is concluded as follows.  
\begin{align} 
 \nonumber \frac{\op{opt}(\sigma)}{\op{greedy}(\sigma)} 
 = 1 + \frac{\sum_{j=1}^{m-1} v_j (A^*_j - A_j)}{\sum_{j=1}^{m} v_j A_j}
 \leq 1 + \frac{\sum_{j=1}^{m-1} v_j A_{j+1}}{\sum_{j=1}^{m-1} v_{j+1} A_{j+1}} \leq 1 + r,
\end{align}
where the equality follows from Lemma \ref{lemma_ag}, while the first and second inequalities follow from Lemma \ref{lemma_secondmainlemma} and Lemma \ref{lemma_cruciallemma}, respectively.
\end{proof}
\subsection{The Two-Valued Case}
The following theorem shows an upper bound of $(\alpha +2)/(\alpha+1)$ on the competitive ratio of \op{greedy}.
\begin{theorem}
The competitive ratio of \op{greedy} is at most $(\alpha +2)/(\alpha+1)$.
\label{thm:two-values}
\end{theorem}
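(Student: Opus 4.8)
The plan is to collapse the claim to a single integer inequality and then prove that inequality from the two structural facts already available plus one genuinely new one. First I would apply Lemma~\ref{lemma_ag} with $m=2$ to obtain $A^*_2 = A_2$, so that $\op{opt}(\sigma) = A^*_1 + \alpha A_2$ and $\op{greedy}(\sigma) = A_1 + \alpha A_2$. Cross-multiplying the target bound $\op{opt}(\sigma)/\op{greedy}(\sigma) \le (\alpha+2)/(\alpha+1)$ and cancelling the common $\alpha(\alpha+1)A_2$ term shows that the theorem is equivalent to
\[
 (\alpha+1)\,(A^*_1 - A_1) \le A_1 + \alpha A_2 .
\]
Writing $D = A^*_1 - A_1 \ge 0$ and splitting $(\alpha+1)D = D + \alpha D$, it therefore suffices to establish the two bounds $D \le A_1$ and $D \le A_2$: the second feeds the $\alpha D \le \alpha A_2$ term and the first feeds the $D \le A_1$ term. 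The bound $D \le A_2$ is exactly Lemma~\ref{centrallemma} specialised to $i=1$, $m=2$, where the right-hand side reduces to $\sum_{j=2}^{2}A_j = A_2$.

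All the difficulty is thus concentrated in the new inequality $D \le A_1$, i.e.\ $A^*_1 \le 2A_1$, and this is the step I expect to be the main obstacle. A tempting route is to replay the per-$r$-interval potential argument of Lemma~\ref{centrallemma} with a potential terminating in $2A_1(I) - A^*_1(I)$, but I believe this cannot succeed interval-by-interval: within a single interval an adversarial schedule may drain its $1$-queue repeatedly while postponing all $\alpha$-transmissions, pushing $A^*_1(I)$ beyond any fixed multiple of $A_1(I)$. What rescues the statement is the \emph{global} constraint $A^*_2 = A_2$ of Lemma~\ref{lemma_ag}: \op{opt} must eventually transmit exactly as many $\alpha$-packets as \op{greedy}, so it cannot spend send events draining $1$-packets without paying for it elsewhere.

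To make this precise I would count send events. Since both algorithms are diligent and, after appending flushing steps, transmit everything they accept, over the $S$ send events of $\sigma$ \op{greedy} transmits in exactly $A_1 + A_2$ of them and is idle in $S - (A_1+A_2)$, while \op{opt} transmits in $A^*_1 + A^*_2 = A^*_1 + A_2$ of them and is idle in $S - (A^*_1 + A_2)$. Subtracting yields the clean identity $D = (\text{\op{greedy} idle events}) - (\text{\op{opt} idle events})$, so $D \le A_1$ is equivalent to saying that \op{greedy} is idle at most $A_1$ times more often than \op{opt}. I would prove this by charging: every send event at which \op{greedy} is idle but \op{opt} is not lies inside a maximal idle run of \op{greedy} — a $g$-interval in the sense of Lemma~\ref{centrallemma}, during which, by Observation~\ref{remark_0interval}, nothing arrives and \op{greedy}'s queues are empty — and there \op{opt} transmits a packet it is still holding. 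I would then inject these events into the $1$-transmissions of \op{greedy}, invoking $A^*_2 = A_2$ to rule out that held $\alpha$-packets account for more than the matching deficit. Carrying out this charging carefully, in particular bounding \op{opt}'s held $1$-packets by what \op{greedy} has already transmitted, is the hard part; once $D \le A_1$ is secured, combining it with $D \le A_2$ gives $(\alpha+1)D \le A_1 + \alpha A_2$ and hence Theorem~\ref{thm:two-values}.
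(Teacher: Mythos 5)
Your reduction is sound and in fact reproduces the skeleton of the paper's own proof: with $A^*_\alpha = A_\alpha$ from Lemma~\ref{lemma_ag}, the bound is equivalent to $(\alpha+1)D \le A_1 + \alpha A_\alpha$ for $D = A^*_1 - A_1$, and it suffices to have $D \le A_\alpha$ (which is indeed Lemma~\ref{centrallemma} with $i=1$, $m=2$) together with $D \le A_1$. The latter is precisely the paper's Lemma~\ref{lem_ineq1}, and this is where your proposal has a genuine gap: you correctly identify $A^*_1 \le 2A_1$ as the crux, you set up the (correct) identity $D = (\text{\op{greedy} idle events}) - (\text{\op{opt} idle events})$ after flushing, but you then only sketch a charging argument over \op{greedy}'s idle runs and explicitly defer ``the hard part.'' As it stands, the theorem is not proved.

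What you are missing are the two devices that make $D \le A_1$ short in the paper. First, a phase decomposition: one may assume w.l.o.g.\ that no packets arrive after the first time all of \op{greedy}'s queues become empty, because arrivals after that moment can be postponed until \op{opt} is also empty (this only helps \op{opt}), splitting $\sigma$ into independent phases that begin and end with both algorithms empty. Second, a queue-occupancy bound (Lemma~\ref{lem_queuesizes}): at all times $b^*(t) \le b(t) + B$. Granting these, the argument takes a few lines: if $D > 0$ then \op{greedy} rejected some 1-packet, so its 1-queue was full at that moment and $A_1 \ge B$; at \op{greedy}'s last send event $i$ (the first time it is empty, by the one-phase assumption) \op{greedy} has sent exactly $i$ packets, i.e.\ $A_1 + A_\alpha = i$, while \op{opt} has sent at most $i$ packets and holds at most $b^*(i) \le B$ more, whence $A^*_1 + A^*_\alpha \le A_1 + A_\alpha + B$; combining with $A^*_\alpha = A_\alpha$ gives $A^*_1 \le A_1 + B \le 2A_1$. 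Note that under the one-phase assumption your own counting picture collapses to exactly this: \op{greedy} has a single terminal idle run, during which \op{opt} can transmit at most $B$ packets, and $B \le A_1$. Your difficulty arises only because you attempt the charging globally, across arbitrarily many idle runs of \op{greedy}; the phase reduction is the missing idea that eliminates that difficulty.
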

We assume that no packets arrive after the first time in which all queues of \op{greedy} become empty. This assumption is without loss of generality as we can partition the event sequence $\sigma$ into phases such that each phase satisfies this assumption, and queues of \op{greedy} and \op{opt} are all empty at the start and end of the phase. Then, it is sufficient to show the theorem on any arbitrary phase. Consider for example the creation of the first phase. Let $t$ be the first time in which all queues of \op{greedy} become empty. We postpone the packets arriving after $t$ until \op{opt}'s queues are empty as well, say at time $t'$. Thus, \op{opt} and \op{greedy} are both empty at time $t'$. This change can increase the benefit of \op{opt} only. Clearly, $t'$ defines the end of the first phase, and the next arrive time after $t'$ defines the start of the second phase. The remaining of $\sigma$ can be further partitioned in the same way.

Let $b(t)$ and $b^*(t)$ denote the total size of the queues of \op{greedy} and \op{opt}, respectively, at time $t$. (Later on, we will slightly abuse these notations to denote the queue sizes right after an event $t$.) The following lemma shows that at any time, the contents of \op{opt} is at most $B$ larger than the contents of \op{greedy}.
\begin{lemma}
\label{lem_queuesizes} 
At any time $t$, $b^{*}(t) \leq b(t) + B$.
\end{lemma}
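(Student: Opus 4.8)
The plan is to prove the invariant by induction over the events of the (single) phase, tracking the signed difference $D(t) = b^{*}(t) - b(t)$ and showing it never exceeds $B$. Since there are only two values, I would split the totals as $b(t) = b_1(t) + b_\alpha(t)$ and $b^{*}(t) = b^{*}_1(t) + b^{*}_\alpha(t)$, where $b_v$ and $b^{*}_v$ denote the sizes of the $v$-queue of \op{greedy} and \op{opt} respectively, each lying in $\{0,\dots,B\}$. The base case is immediate: at the start of the phase all queues are empty, so $D = 0 \le B$. It then suffices to check that, assuming $D \le B$ before an event, the event leaves $D \le B$.

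\emph{Send events.} By diligence each algorithm removes one packet whenever its queues are non-empty. If both are non-empty, $b$ and $b^{*}$ each drop by one and $D$ is unchanged. If \op{opt} is empty then $b^{*} = 0$, so $D = -b \le 0 \le B$ both before and after. If \op{greedy} is empty but \op{opt} is not, then $b = 0$ and $b^{*}$ decreases, so $D$ only decreases. Thus a send event never raises $D$ above its prior value whenever that value was positive, and the bound is preserved; note this holds even though \op{greedy} and \op{opt} may serve different queues, since only the total sizes matter here.

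\emph{Arrive events.} The only case that can increase $D$ is a $v$-packet that \op{opt} accepts but \op{greedy} rejects, which raises $b^{*}_v$ by one. Here is the crux of the argument: \op{greedy} rejects a $v$-packet only when its $v$-queue is already full, i.e. $b_v = B$, whereas \op{opt} accepts only when $b^{*}_v < B$, i.e. $b^{*}_v \le B-1$. Hence, just before the event, $b^{*}_v - b_v = b^{*}_v - B \le -1$, while for the other value $v'$ one trivially has $b^{*}_{v'} - b_{v'} \le B$. Summing gives $D \le B-1$ before the event, so after \op{opt}'s acceptance $D \le B$. In every remaining arrival case (both accept, both reject, or \op{greedy} accepts while \op{opt} rejects) $D$ does not increase. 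Together with the base case and the send-event analysis, this establishes $b^{*}(t) \le b(t) + B$ for all $t$.

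The main obstacle is the critical arrival case, and specifically the observation that \op{greedy}'s rejection \emph{saturates} the corresponding queue, forcing the signed contribution $b^{*}_v - b_v$ of that value to be non-positive. This is exactly where the restriction to two values is essential: with only two queues, at the moment $D$ could grow, at most one value can carry a positive surplus, capping the total surplus at $B$ rather than the $2B$ a naive per-queue bound would allow. The send-event bookkeeping is routine but still needs the asymmetric empty/non-empty subcases to be handled explicitly so that the bound is never violated when only one of the two algorithms is active.
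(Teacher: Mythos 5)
Your proof is correct and follows essentially the same route as the paper's: induction over events, with the same two critical cases (a send where \op{opt} is empty, and an arrival that \op{opt} accepts but \op{greedy} rejects). The only difference is cosmetic --- you do per-queue accounting in the critical arrival case (getting $D \le B-1$ before the event), whereas the paper argues in aggregate that $b \ge B$ and $b^* \le 2B$; both rest on the same saturation-plus-capacity observation.
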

\begin{proof}
We do an induction proof over the number of send and arrive events to show that $\varphi(t) = b(t) - b^{*}(t) + B \geq 0$, for any $t \geq 0$. Obviously, all queues are empty at time 0 and thus the claim holds. We now assume it holds for all events up to the $i$-th. If the $i$-th event is a send event, the only critical case is when \op{greedy} sends a packet while \op{opt} does not. In this case, \op{opt}'s queues must be empty as \op{opt} is assumed to be diligent, and hence the claim holds.

Assume now that the $i$-th event is an arrive event. Here, the only critical case is when \op{opt} accepts the arriving packet while \op{greedy} rejects it. In this case, we know that at least one queue of \op{greedy} is full. Thus, $b(i) \geq B$. Since the total size of \op{opt}'s queues cannot exceed $2B$, $b(i)-b^{*}(i)+B \geq 0$.
\end{proof}
\\ \\
Besides Lemma \ref{centrallemma}, the following lemma gives another upper bound on the number of 1-packets that \op{opt} accepts but \op{greedy} rejects.
\begin{lemma}
\label{lem_ineq1} 
$ A^{*}_{1} - A_{1} \leq A_{1}$.
\end{lemma}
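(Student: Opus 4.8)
The plan is to prove the equivalent inequality $A^*_1 \le 2A_1$ by combining two separate bounds, each governing a different regime, followed by a short case split. Throughout I work inside a single phase, using the reduction already set up before the lemma (the queues of both algorithms are empty at the phase boundaries, and no packets arrive after \op{greedy} first empties).

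First I would establish the auxiliary bound $A^*_1 \le A_1 + B$ by a send-event count. Let $t$ be the first time all of \op{greedy}'s queues are empty and let $t'$ be the end of the phase. On $[0,t]$, \op{greedy} is diligent and its queues are non-empty at every send event (otherwise they would have emptied before $t$), so it transmits exactly one packet per send event; since every packet it accepts is eventually sent, there are exactly $A_1 + A_2$ send events up to $t$. On $(t,t']$ no packets arrive and \op{greedy} is idle, so the only send events are those in which \op{opt} drains its residual queue; applying Lemma \ref{lem_queuesizes} at time $t$, where $b(t)=0$, gives $b^*(t) \le B$, so there are at most $B$ such events. Hence the phase contains at most $A_1 + A_2 + B$ send events. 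On the other hand \op{opt}, being diligent, occupies one send event per transmitted packet, i.e. $A^*_1 + A^*_2$ of them, and $A^*_2 = A_2$ by Lemma \ref{lemma_ag} (with $m=2$). Comparing the two counts yields $A^*_1 + A_2 \le A_1 + A_2 + B$, that is, $A^*_1 \le A_1 + B$.

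To turn this into the desired factor-two bound I would split on the size of $A_1$ relative to $B$. If $A_1 \ge B$, then $A^*_1 \le A_1 + B \le 2A_1$ immediately. If $A_1 < B$, then \op{greedy}'s 1-queue never reaches $B$ packets, since its size never exceeds the number of 1-packets it has accepted so far, which is at most $A_1 < B$; consequently \op{greedy} never rejects a 1-packet, accepts every arriving one, and so $A^*_1 \le A_1$. In either case $A^*_1 \le 2A_1$, equivalently $A^*_1 - A_1 \le A_1$, which is the claim.

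The main obstacle is the send-event count behind $A^*_1 \le A_1 + B$: I must argue cleanly that \op{greedy} transmits at every send event up to $t$ and that the tail $(t,t']$ contributes at most $B$ events, which is precisely where Lemma \ref{lem_queuesizes} together with the empty-boundary reduction is used. Once that bound is secured, the observation that a full 1-queue forces $A_1 \ge B$ makes the case split routine.
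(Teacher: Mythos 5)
Your proposal is correct and follows essentially the same route as the paper: the send-event count up to the moment \op{greedy} first empties, combined with Lemma \ref{lemma_ag} and Lemma \ref{lem_queuesizes}, yields $A^*_1 \le A_1 + B$, and the link between a full 1-queue and $A_1 \ge B$ closes the argument. The only difference is organizational --- the paper assumes $A^*_1 > A_1$ and derives $A_1 \ge B$ from the existence of a rejected 1-packet, whereas you prove $A^*_1 \le A_1 + B$ unconditionally and handle $A_1 < B$ by the contrapositive (no rejections, hence $A^*_1 \le A_1$); these are the same ideas in a different order.
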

\begin{proof}
Intuitively, the lemma holds if $A^{*}_{1} \leq A_{1}$. Thus, assume that $A^{*}_1 > A_1$. Therefore, there must exist at least one 1-packet $p$ that is accepted by \op{opt} but rejected by \op{greedy}. At the arrival of $p$, \op{greedy}'s 1-queue must be full, and thus $A_{1} \geq B$.

Let \op{greedy} send its last packet at the $i$-th send event. Clearly, $b(i) = 0$. Since $\sigma$ is a one-phase sequence, \op{greedy}'s queues become empty at $i$ for the first time. Thus, the total number of packets accepted by \op{greedy} is exactly $i$, i.e., $A_{1}+A_{\alpha} = i$. Moreover, \op{opt} sends at most $i$ packets up to event $i$. Hence, $A^{*}_{1}+A^{*}_{\alpha} \leq i + b^{*}(i) = A_{1}+A_{\alpha} + b^{*}(i)$. By Lemma \ref{lemma_ag} and \ref{lem_queuesizes}, $A^{*}_{\alpha} = A_{\alpha}$ and $b^{*}(i) \leq B$. Therefore, $A^{*}_{1} \leq A_{1} + B$. Finally, given that $A_{1} \geq B$, $A^{*}_{1} \leq A_{1} + A_{1}$.
\end{proof}
\\ \\
\begin{proof}[Proof of Theorem~\ref{thm:two-values}] Given Lemma \ref{lemma_ag}, if $A^{*}_{1} \leq A_{1}$, then \op{greedy} is obviously 1-competitive. Thus, assume that $A^{*}_{1} > A_{1}$. Therefore, the competitive ratio of \op{greedy} is concluded as follows.  
\begin{align} 
 \nonumber \frac{\op{opt}(\sigma)}{\op{greedy}(\sigma)} 
  & = 1 + \frac{A^{*}_{1} - A_{1}}{\alpha A_{\alpha}+A_{1}} \leq 1 + \frac{A^*_1 - A_1}{\alpha(A^*_1 - A_1) + (A^*_1 - A_1)} = \frac{\alpha + 2}{\alpha+1},
\end{align}
where the first equality follows from Lemma \ref{lemma_ag}, and the inequality from Lemma \ref{centrallemma} and Lemma \ref{lem_ineq1}.
\end{proof} 
\section{A Lower Bound}
Finally, we show a lower bound on the competitive ratio of any deterministic online algorithm.
\begin{theorem}
The competitive ratio of any deterministic online algorithm is at least $2- v_{m}/(\sum_{i=1}^{m} v_{i})$.
\end{theorem}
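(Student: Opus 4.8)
The plan is to construct an adversarial input sequence that forces any deterministic online algorithm $\op{alg}$ to make a fatal early decision, while an omniscient $\op{opt}$ can exploit the full sequence to transmit strictly more total value. Since we have $m$ values and one queue of capacity $B$ per value, the natural device is to release a large burst of packets of every value simultaneously at the start, so that all queues of both $\op{alg}$ and $\op{opt}$ are filled to capacity $B$. The key leverage is that in each subsequent send event only one packet leaves, and by continuing to feed packets of the lower values (keeping those queues topped up) the adversary can ensure that $\op{alg}$ and $\op{opt}$ never disagree on what is \emph{in} the queues, only on the \emph{order} in which value is drained. The target ratio $2 - v_m/\sum_i v_i$ suggests that $\op{opt}$ should ultimately collect essentially twice the total queue value minus one copy of the top value $v_m$, so I would aim for an instance where $\op{opt}$ sends every queued packet and additionally harvests a second full complement of the lower-value packets that $\op{alg}$ is forced to discard.

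\textbf{First I would} fix the capacity at $B=1$ for each queue, which is the cleanest regime for a matching lower bound and is consistent with the $2-1/m$ benchmark mentioned in the related-work discussion. The construction: at time $0^+$ one packet of each value $v_1,\dots,v_m$ arrives, filling all $m$ queues. Then over the next $m$ send events the adversary repeatedly re-injects, just before each send, fresh packets of the values that $\op{alg}$ has \emph{not yet} emptied but \emph{could} still be sending, forcing $\op{alg}$ into a schedule where it transmits the values in some order $v_{\pi(1)},\dots,v_{\pi(m)}$. The crucial adversarial move is to keep supplying low-value packets so that whenever $\op{alg}$ sends a high-value packet early, the slot that $\op{opt}$ would have used for a low-value packet is preserved for $\op{opt}$ later. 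I would then compute $\op{alg}$'s benefit as the sum over the send events of whatever single packet it chooses, and show this is bounded above by $\sum_{i=1}^m v_i$ (it can send each value at most as the construction permits), whereas $\op{opt}$, knowing the burst in advance, front-loads the top value and uses its extra slots on the re-injected low-value packets, achieving $2\sum_{i=1}^m v_i - v_m$.

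\textbf{The hard part will be} arranging the re-injection timing so that the bound holds against \emph{every} deterministic online algorithm uniformly, not just against $\op{greedy}$. The adversary cannot see $\op{alg}$'s future choices but can react to its past sends, so the argument must be phrased as an adaptive construction: after observing which queue $\op{alg}$ empties at each send event, the adversary immediately refills the queues it wants to keep contested. I expect the delicate step is verifying that this adaptive refill never lets $\op{alg}$ gain back the lost value — i.e.\ proving the invariant that $\op{alg}$'s cumulative benefit stays at most $\sum_{i=1}^m v_i$ while $\op{opt}$'s reaches $2\sum_{i=1}^m v_i - v_m$, giving
\begin{equation}
\nonumber \frac{\op{opt}}{\op{alg}} \ge \frac{2\sum_{i=1}^m v_i - v_m}{\sum_{i=1}^m v_i} = 2 - \frac{v_m}{\sum_{i=1}^m v_i}.
\end{equation}
To make this rigorous I would track, at each step, the multiset of packets available to $\op{opt}$ but provably unsendable by $\op{alg}$ (those $\op{alg}$ rejected because a unit-capacity queue was occupied), and charge each such discarded packet to a distinct send event in $\op{opt}$'s optimal offline schedule; the accounting is where the factor $-v_m/\sum_i v_i$ enters, reflecting that the single top-value packet is the one slot $\op{opt}$ cannot double-count.
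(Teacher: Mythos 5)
Your construction agrees with the paper's up to the crucial last step: unit-capacity queues, an initial burst of one packet per value, and adaptive re-injection of exactly those values that $\op{alg}$ has not yet sent, so that $\op{alg}$ rejects every re-injected packet and finishes with benefit exactly $\sum_{i=1}^m v_i$. The genuine gap is the offline schedule. You leave it at ``front-loads the top value and uses its extra slots on the re-injected low-value packets,'' and that strategy provably falls short of the claimed bound. Test it against $\op{greedy}$: there $s_1=v_m$ (the value $\op{alg}$ sends in step 1), so $v_m$ is never re-injected; an offline player that spends its first send on $v_m$ vacates a queue that will never be refilled, and any later send, made at step $t\ge 2$, can be recaptured only if its value lies in $V_{t+1}=\{v_1,\dots,v_{m-t}\}$. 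Its total benefit is therefore at most $\sum_{i=1}^m v_i + \sum_{i=1}^{m-2} v_i = 2\sum_{i=1}^m v_i - v_m - v_{m-1}$, giving a ratio of only $2-(v_m+v_{m-1})/\sum_{i=1}^m v_i$, strictly below the target. No charging invariant can rescue this; the schedule itself is wrong.

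The missing idea is that the offline player must run one step \emph{ahead} of $\op{alg}$, neither mirroring it nor ignoring it: because $\op{alg}$ is deterministic, the adversary can simulate it on the already-determined future input and compute the value $s_{t+1}$ that $\op{alg}$ will send in step $t+1$; the adversary's offline player $\op{adv}$ sends exactly $s_{t+1}$ in step $t$. The queue $\op{adv}$ vacates at step $t$ is then precisely the one refilled by the step-$(t+1)$ injection (note $s_{t+1}\in V_{t+1}$, and step $t+1$ is the last step in which value $s_{t+1}$ arrives), so $\op{adv}$ keeps all queues full throughout, collects one extra copy of each of $s_2,\dots,s_m$, and at the end drains all $m$ full queues, for a benefit of $2\sum_{i=1}^m v_i - s_1 \ge 2\sum_{i=1}^m v_i - v_m$. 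Note also that your diagnosis of the $-v_m$ term is off: the packet that cannot be doubled is not ``the top-value packet'' but $s_1$, whatever $\op{alg}$ chooses to send first; the $-v_m$ merely reflects the worst case $s_1=v_m$. Finally, your premise that ``the adversary cannot see $\op{alg}$'s future choices'' is exactly the misconception blocking the fix: against a deterministic algorithm it effectively can, by simulation, and that is what makes the one-step-ahead schedule well defined.
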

\begin{proof}
Let \op{alg} be any deterministic online algorithm, which will have to compete against an offline algorithm \op{adv}. As argued earlier, we may assume that both \op{alg} and \op{adv} are diligent. Furthermore, assume that there exists exactly one queue of each value, and that all queues are of unit size, i.e., $B_k=1$, for $1 \leq k \leq m$.

We construct an adversarial instance $\sigma$ in the following way: In each time step $1 \leq i \leq m$, packets of \emph{distinct} values arrive. We denote the set of values of the packets arriving in step $i$ by $V_i \subseteq V$. Let $V_1 = V$. By diligence, we may assume that \op{alg} accepts all packets in step 1. Let $s_1$ be the packet-value sent by \op{alg} in the send event of step $1$. Since \op{alg} is online and deterministic, this value is well-defined. Define $V_2 = V_1 - \{ s_1 \}$ and let $s_2$ be the packet-value sent by \op{alg} if $V_2$ arrives in step $2$. Now define $V_3 = V_2 - \{ s_2 \}$ and in general $V_{t+1} = V_{t} - \{ s_t \}$ for $t = 1, \dots, m-1$. Clearly, $V_1 \supset V_2 \supset \dots \supset V_m$.

Observe that all packets arriving in the time steps $2, \dots, m$ are of values corresponding to non-empty queues in \op{alg}. So \op{alg} must reject all of these packets. Hence $\op{alg}(\sigma) = \sum_{i = 1}^m v_i$ since \op{alg} accepts all packets arriving in step $1$.

Now we define the algorithm \op{adv}: It accepts all packets $V_1$ in step $1$, but sends the packet with value $s_2$ in the send event of that step. Since $s_2$ is still in the queues of \op{alg} in step $2$, we have $s_2 \in V_2$. Now \op{adv} accepts the packet with value $s_2$ in step $2$ and rejects all other packets. In the send event of step $2$, \op{adv} sends a packet with value $s_3$. In general, in the send event of step $t$, it sends a packet with value $s_{t+1}$ for $t = 1, \dots, m-1$; and, for $t = 2, \dots, m$, \op{adv} accepts the packet with value $s_t$ in step $t$ and rejects all other packets.

Observe that before the send event of step $m$, \op{adv} still has one packet of each value in its queues and no further packets will arrive. So, \op{adv} sends those packets in the send events of steps $m, \dots, 2m-1$, in any order. Therefore, if \op{alg} sends one packet of value $v$, \op{adv} sends two packets of value $v$, except for the one packet with value $s_1$. Thus we have $\op{adv}(\sigma) = 2 \cdot \sum_{i = 1}^m v_i - s_1 \ge 2 \cdot \sum_{i = 1}^m v_i - v_m$.

Therefore,
\[
\frac{\op{opt}(\sigma)}{\op{alg}(\sigma)} \geq  \frac{\op{adv}(\sigma)}{\op{alg}(\sigma)} \geq \frac{2 \cdot \sum_{i=1}^{m} v_{i} \;\;- v_{m}}{\sum_{i=1}^{m} v_{i}} = 2- \frac{v_{m}}{\sum_{i=1}^{m} v_{i}}.
\]
\end{proof}
\section*{Acknowledgements}
 The authors thank Toshiya Itoh and anonymous referees for their helpful remarks.

\end{document}